\newcommand{\copyrightnote}[2]{{\renewcommand{\thefootnote}{}
 \footnotetext{\small\it
\begin{flushleft}
 \copyright \ #1   #2
\end{flushleft}}}}
\newcommand{\Name}[1]{\begin{flushleft}
                       \LARGE \bf #1
                       \end{flushleft}\vspace{-3mm}}
\newcommand{\Author}[1]{\begin{flushleft}
                       \it #1 \end{flushleft}}
\newcommand{\Address}[1]{\begin{flushleft}
                       \it #1 \end{flushleft}}
\newcommand{\Date}[1]{\begin{flushleft}
                      \small  \it #1 \end{flushleft}}
\newcommand{\evenhead}{Author \ name}
\newcommand{\oddhead}{Article \ name}
\renewcommand{\@evenhead}{
\hspace*{-3pt}\raisebox{-15pt}[\headheight][0pt]{\vbox{\hbox to \textwidth
{\thepage \hfil \evenhead}\vskip4pt \hrule}}}
\renewcommand{\@oddhead}{
\hspace*{-3pt}\raisebox{-15pt}[\headheight][0pt]{\vbox{\hbox to \textwidth
{\oddhead \hfil \thepage}\vskip4pt\hrule}}}
\renewcommand{\@evenfoot}{}
\renewcommand{\@oddfoot}{}
\long\def\@makecaption#1#2{%
  \vskip\abovecaptionskip
  \sbox\@tempboxa{\small \textbf{#1.}\ \ #2}%
  \ifdim \wd\@tempboxa >\hsize
    {\small \textbf{#1.}\ \ #2}\par
  \else
    \global \@minipagefalse
    \hb@xt@\hsize{\hfil\box\@tempboxa\hfil}%
  \fi
  \vskip\belowcaptionskip}
\newcommand{\JNMPnumberwithin}[3][\arabic]{%
  \@ifundefined{c@#2}{\@nocounterr{#2}}{%
    \@ifundefined{c@#3}{\@nocnterr{#3}}{%
      \@addtoreset{#2}{#3}%
      \@xp\xdef\csname the#2\endcsname{%
        \@xp\@nx\csname the#3\endcsname .\@nx#1{#2}}}}%
}
\renewenvironment{proof}[1][\proofname]{\par
  \normalfont
  \topsep6\p@\@plus6\p@ \trivlist
  \item[\hskip\labelsep\textbf{%
    #1\@addpunct{.}}]\ignorespaces
}{%
  \qed\endtrivlist
}
\newcommand{\resetfootnoterule} {
  \renewcommand\footnoterule{%
  \kern-3\p@
  \hrule\@width.4\columnwidth
  \kern2.6\p@}
}
\renewcommand{\footnoterule}{}
\theoremstyle{definition}
\newtheorem{rmk}{Remark}
\newtheorem{thm}{Theorem}
\newtheorem{coro}{Corollary}
\begin{document}

\renewcommand{\evenhead}{ {\LARGE\textcolor{blue!10!black!40!green}{{\sf \ \ \ ]ocnmp[}}}\strut\hfill C Muriel and M C  Nucci}
\renewcommand{\oddhead}{ {\LARGE\textcolor{blue!10!black!40!green}{{\sf ]ocnmp[}}}\ \ \ \ \   Generalized Symmetries, First Integrals, and  Exact Solutions}

%%%% Matter for the first page
\thispagestyle{empty}
\newcommand{\FistPageHead}[3]{
\begin{flushleft}
\raisebox{8mm}[0pt][0pt]
{\footnotesize \sf
\parbox{150mm}{{Open Communications in Nonlinear Mathematical Physics}\ \  \ \ {\LARGE\textcolor{blue!10!black!40!green}{]ocnmp[}}
\quad Vol.1 (2021) pp
#2\hfill {\sc #3}}}\vspace{-13mm}
\end{flushleft}}

\FistPageHead{1}{\pageref{firstpage}--\pageref{lastpage}}{ \ \ Article}

\strut\hfill

\strut\hfill

\copyrightnote{The author(s). Distributed under a Creative Commons Attribution 4.0 International License}

\Name{Generalized Symmetries, First Integrals, and  Exact Solutions of Chains of Differential Equations}

\Author{C Muriel$^{\,1}$ and M~C Nucci$^{\,2}$}

\Address{$^{1}$ Departamento de Matem\'aticas, Universidad de C\'adiz, 11510 Puerto Real, Spain\\[2mm]
$^{2}$ Dept. Math. Inf. Phys. Earth Sci. (MIFT), University of Messina, 98166 Messina, \\\& INFN, Section of Perugia, 06123 Perugia, Italy}
%MIFT,  Universit\`a degli Studi di Messina, 98166 Messina, Italy}

\Date{Received 14 April 2021; Accepted 12 June 2021}

\setcounter{equation}{0}

\begin{abstract}
\noindent
New integrability properties of a family of sequences of ordinary differential equations, which contains the Riccati and Abel chains as the most simple sequences, are studied. The determination of $n$ generalized symmetries of the $n$th-order equation in each chain provides, without any kind of integration, $n-1$ functionally independent first integrals of the equation. A remaining first integral arises by a quadrature by using a Jacobi last multiplier that is expressed in terms of the preceding equation in the corresponding sequence.  The complete set of $n$ first integrals is used to obtain the exact general solution of the $n$th-order equation of each sequence.
The results are applied to derive directly the exact general solution of any equation in the Riccati and Abel chains.
\end{abstract}

\label{firstpage}

%%%% The Article text starts here

\section{Introduction}\label{seccion2}
For a given  smooth function $g=g(u)$ defined on some open interval $J\subset \mathbb{R},$ let us  define the differential operator
\begin{equation}\label{operador}
\mathbb{D}_g=\mathbf{D}_t+g(u),
\end{equation}  where $\mathbf{D}_t$ denotes the total derivative operator
\begin{equation}
\label{dt}
\mathbf{D}_t=\partial_t+u_1\partial_u+\cdots+u_k\partial_{u_{k-1}}+\cdots,
\end{equation}
and $u_i=\dfrac{d^{i}u}{d t^i},$ for $i\in\mathbb{N}.$

The differential operator $\mathbb{D}_g$ acts on the set of smooth functions $u=u(t),$  defined in some open interval $I\subset \mathbb{R}$  such that $u(t)\in J$ for $t\in I.$ We set $\mathbb{D}_g^0u=u$ and, for $j\ge 1$, $\mathbb{D}_g^j(u)=\mathbb{D}_g(\mathbb{D}_g^{j-1}(u))$.

We consider the sequence of ordinary differential equations (ODEs)
\begin{equation}\label{cadenageneral}\mathbb{D}^n_g u=0, \qquad (n\in \mathbb{N})\end{equation}
obtained by applying successively the differential operators in the sequence $\{\mathbb{D}_g^n\}_{n\in \mathbb{N}}$ to an unknown function $u=u(t).$
In what follows the sequence of ODEs  $\mathbb{E}_g:=\{\mathbb{D}_g^n(u)=0\}_{n\in \mathbb{N}}$ will be called  \textsl{the chain generated by the function $g.$}

Recursion operators firstly appeared in the context of evolution equations in two independent variables \cite{olver1977} and are traditionally applied to partial differential equations. M Euler et al.   \cite{euler2007riccati} adapted them to ordinary differential equations, by considering a 1+1 evolution equation and a known recursion operator free of $t.$ One the most well known sequences of the   form (\ref{cadenageneral}) is the chain generated by the function $g(u)=ku,$ where $k \in\mathbb{R},$ which is known in the literature as the Riccati chain  of parameter $k\in \mathbb{R}$ and whose elements are usually called   the higher Riccati equations \cite{andriopoulosdifferential,carinena2009geometric,euler2007riccati,leach2009novel}. Riccati sequence arises from a known recursion operator of a second-order evolution equation (Class VIII in \cite{EulerMyN}) linearizable by a $x$-generalized hodograph transformation \cite{euler2003linearizable}. The Riccati sequence exhibits interesting properties from both physical and mathematical point of view, and a large variety of methods (Darboux factors, Jacobi multipliers, extended Prelle-Singer methods, nonlocal symmetries, etc.) have been applied to its study   \cite{ andriopoulos2008generalised,andriopoulosdifferential, carinena2009geometric}.

For different functions $g=g(u)$ the differential operator (\ref{operador}) generates new sequences $\mathbb{E}_g$, some of whose properties have been obtained in   \cite{muriel2014lambda}. In particular, it was proved the existence of a common $\mathcal{C}^{\infty}-$symmetry \cite{muriel01ima1} for  all the equations of the sequence $\mathbb{E}_g:$ the pair $(\mathbf{v}, {\lambda}),$ where
\begin{equation}
\label{lambdasym}\mathbf{v}=\partial_u\quad \mbox{and}\quad  {\lambda}=\dfrac{u_1}{u}-u\,g'(u)
\end{equation} defines a $\mathcal{C}^{\infty}-$symmetry   of the $n$th-order equation  $\mathbb{D}^n_g u=0,$ for $n\in \mathbb{N}$   \cite[Theorem 2.1]{muriel2014lambda}.  This $\mathcal{C}^{\infty}-$symmetry was used to connect  any sequence $\mathbb{E}_g$ with the Riccati chain  of parameter $k\in \mathbb{R}.$

%%%%%%%%%%%%%%%%

In this paper we aim to investigate  some integrability properties of the class of chains generated by functions of the form $g(u)=ku^m,$  where $k \in\mathbb{R}$ and $m\in\mathbb{Z}.$ The corresponding elements of this type of sequences  will be denoted by  $P_n=0,$ where %$P_i(u^{(i)})=(\mathbf{D}_t+ku^m)^i(u)$
\begin{equation}\label{cadenaparticular}P_0:=u,\quad P_n:=(\mathbf{D}_t+ku^m)^n(u), \qquad n=1,2,\dots.\end{equation}
It should be observed that the Riccati chain is contained in the study, as well as the  chain  generated by the function  $g(u)=ku^2$, $u\in\mathbb{R}$, which is known as the Abel chain  of parameter $k\in \mathbb{R}$  and that has also been extensively studied in the literature \cite{Carinena2011,Carinena2009Abel}. Both chains contain as particular cases well-known families of equations in Mathematical Physics, such as Emden equations, generalized Van-der Pol oscillators, Chazy equations, etc.

For the purposes of this paper, a notable property of the $n$th-order equation $P_n=0$ is the existence of a Jacobi last multiplier \cite{nucci2005jacobi} that  can be expressed in terms of the previous element of the sequence as follows \cite[Theorem 3.3]{muriel2014lambda}:
\begin{equation}
\label{jlm}
M_n:=(P_{n-1})^{-(n+m)}.
\end{equation}
Both elements, the Jacobi last multiplier \eqref{jlm} and the common $\mathcal{C}^{\infty}-$symmetry (\ref{lambdasym}),
will be exploited in the study performed in this paper, which is organized as follows. In Section \ref{section2},  we firstly use the corresponding common $\mathcal{C}^{\infty}-$symmetry \eqref{lambdasym}  to derive  $n$ generalized symmetries of the $n$th-order equation $P_n=0.$
%written in evolutionary form, $$\mathbf{w}_{n,i}=\rho_{n,i}(t,u^{(n-1)})\partial_u,\quad 1\leq i\leq n.$$
In Section \ref{section3} it is proved that these generalized symmetries have the property that any ratio of their characteristics is a first integral of the equation $P_n=0;$ furthermore, $n-1$ of such first integrals are functionally independent (Theorem \ref{teofirstintegrals}).  The Jacobi last multiplier (\ref{jlm}) is used in Section \ref{section4} to compute a remaining first integral, which constitutes, together with the $n-1$ previously determined, a complete set of first integrals of $P_n=0.$ As a consequence, we obtain explicitly the general solution of all the equations  in any of the chains of the family.  The results are applied to study the Riccati and Abel chains in sections  \ref{seccionR} and \ref{seccionA}, respectively.  Additionally, we prove that  for each equation in the Riccati chain it is possible to derive an additional generalized symmetry. As a consequence, a complete set of first integral for the $n$th-order equation in the Riccati chain is obtained without any kind of integration.

Remarkably, the unified procedure presented in this paper is valid for all the chains generated by $g(u)=ku^m,$, providing a common expression, depending on $m,$ for the exact solutions of all the equations in any chain. This greatly improves the procedure presented in \cite{muriel2014lambda}, which required an additional integration to derive such solutions from the solutions of the higher Riccati equations. Moreover, our results are valid for real values of the parameter $m$ (Section \ref{section7}), enlarging significantly the classes of ODEs which can be completely solved by this new procedure, without any kind of integration.

\section{Derivation of $n$ generalized symmetries by using a
	\\ $\mathcal{C}^{\infty}-$symmetry}\label{section2}

In this section we address the problem of determining $n$ generalized symmetries of the $n$th-order equation: %$P_n=0:$
\begin{equation}
\label{Pn} P_n:=(\mathbf{D}_t+ku^m)^n(u)=0,\qquad k \in\mathbb{R}, \quad m\in\mathbb{Z},
\end{equation}
by using the corresponding $\mathcal{C}^{\infty}-$symmetry  (\ref{lambdasym}), which for  $g(u)=ku^m$  becomes
\begin{equation}
\label{lambdasympar}\mathbf{v}=\partial_u\quad \mbox{and}\quad  {\lambda}=\dfrac{u_1}{u}-k\,m\, u^m.
\end{equation}

%%%%%%%%NUEVO
We recall that the concept of $\mathcal{C}^\infty-$symmetry arises naturally from the concept of Lie point symmetry when $\lambda-$prolongations are considered instead of standard prolongations: for any vector field $\mathbf{v}=\xi(t,u)\partial_t+\eta(t,u)\partial_u$ and a smooth function $\lambda=\lambda(t,u,u_1),$  the $j$th--order $\lambda-$prolongation of $\mathbf{v}$ is denoted by $\mathbf{v}^{[\lambda,(j)]}$ and defined as
\begin{equation}\label{lambdaprol1}
\mathbf{v}^{[\lambda,(j)]}:=\mathbf{v}+\displaystyle \eta^{[\lambda,(1)]}\big(t,u^{(1)}\big)\partial_{u_1}+\cdots+\eta^{[\lambda,(j)]}\big(t,u^{(j)}\big)\partial_{u_j},\end{equation}
where
\begin{equation}\label{lprol2}
\eta^{[\lambda,(0)]}=\eta,\qquad \eta^{[\lambda,(i)]}=(\mathbf{D}_t+\lambda)\big(\eta^{[\lambda,(i-1)]}\big)-u_i(\mathbf{D}_t+\lambda)\big(\xi\big),
\end{equation}
for $i=1,\dots,j$.
The pair $(\mathbf{v},\lambda)$ defines a $\mathcal{C}^\infty-$symmetry (or  $\mathbf{v}$ is a $\lambda-$symmetry) of equation (\ref{Pn})   if and only if \cite{muriel01ima1}
\begin{equation}\label{lambdasim001}
\mathbf{v}^{{[\lambda,(n)]}}\big(P_n\big)=0\quad \mbox{ when }\quad  P_n=0.
\end{equation}
Equivalently,  $(\mathbf{v},\lambda)$ is a $\mathcal{C}^\infty-$symmetry of (\ref{Pn}) if and only if \cite{muriel01ima1}
\begin{equation}\label{lambdasim010}
[\mathbf{v}^{{[\lambda,(n-1)]}},\mathbf{A}_n]=\lambda \mathbf{v}^{{[\lambda,(n-1)]}}-(\mathbf{A}_n+\lambda)(\xi)\mathbf{A}_n,
\end{equation} where now, and henceforth, $\mathbf{A}_n$ denotes  the restriction of $\mathbf{D}_t$ to the manifold defined by equation (\ref{Pn}).

This primitive  concept of  $\mathcal{C}^{\infty}-$symmetry was later extended to permit the function $\lambda$ or the infinitesimals  $\xi,\eta$ of $\mathbf{v}$ belong to the space  of smooth functions on $t, u$ and the derivatives of $u$ with respect to $t$ up to some finite but unspecified order
\cite[Def. 2.1]{muriel03lie},
\cite[Sect. 2.3]{muriel2012SIGMA}.  In these cases the pair $(\mathbf{v},\lambda)$ is called a generalized $\lambda-$symmetry (or a generalized $\mathcal{C}^{\infty}-$symmetry). Clearly, when $\lambda=0,$ \eqref{lambdaprol1} is the standard prolongation of  $\mathbf{v}$ and therefore the pair $(\mathbf{v},0)$ is a (generalized) $\mathcal{C}^\infty-$symmetry if and only if $\mathbf{v}$ is a (generalized) Lie symmetry.
% In particular,  given a smooth function $\rho=\rho(t,u^{(i-1)}),$ $i\leq n,$ the pair  $(\rho\,\mathbf{v},\bar{\lambda})$ is a generalized $\mathcal{C}^{\infty}-$symmetry of equation (\ref{Pn}) for  the function %$\bar{\lambda}={\lambda}-{\mathbf{D}_t\rho}/{\rho}$
% \[\bar{\lambda}={\lambda}-\frac{\mathbf{A}_n(\rho)}{\rho}. %\quad \mbox{mod} \quad (\mathbf{D}_t+ku^m)^n(u)=0,
% \]
%%%%%%%%%NUEVO

A remarkable property of $\mathcal{C}^{\infty}-$symmetries for the purpose of this section is the following \cite{muriel03lie}:  if $(\mathbf{v},{\lambda})$ is a $\mathcal{C}^{\infty}-$symmetry of equation (\ref{Pn}) and $\rho=\rho(t,u^{(i-1)}),$ $i\leq n,$ is any given smooth function, then the pair  $(\rho\,\mathbf{v},\bar{\lambda})$ is  a generalized  $\mathcal{C}^{\infty}-$symmetry of equation (\ref{Pn}) for  the function %$\bar{\lambda}={\lambda}-{\mathbf{D}_t\rho}/{\rho}$
\[\bar{\lambda}={\lambda}-\frac{\mathbf{A}_n(\rho)}{\rho}. %\quad \mbox{mod} \quad (\mathbf{D}_t+ku^m)^n(u)=0,
\]
In particular, when $\bar{\lambda}=0,$ i.e., if the function $\rho$ verifies $
\dfrac{\mathbf{A}_n(\rho)}{\rho}=\lambda,$ then $ \rho\mathbf{v}$  is a generalized symmetry.

Applying this result to the $\mathcal{C}^{\infty}-$symmetry  \eqref{lambdasympar} of equation (\ref{Pn}), we have that if a function $\rho=\rho(t,u^{(n-1)})$ verifies

\begin{equation}
\label{rho}
\dfrac{\mathbf{A}_n(\rho)}{\rho}=\lambda=\dfrac{u_1}{u}-k\,m\, u^m, %\quad \mbox{mod} \quad (\mathbf{D}_t+ku^m)^n(u)=0,
\end{equation} then the generalized vector field  $\rho(x,u^{(n-1)})\,\partial_u$ becomes a generalized symmetry of the equation (\ref{Pn}).

In the next theorem we provide $n$ functions satisfying (\ref{rho}) which are used to construct $n$ generalized symmetries of  equation (\ref{Pn}):

\begin{thm}
	\label{teorema1}
	Let  $P_0:=u$ and  $P_n:=(\mathbf{D}_t+ku^m)(P_{n-1})$ for $n\geq 1.$ For $1\leq i\leq n,$
	the vector fields
	\begin{equation}
	\label{formula}
	\mathbf{w}_{n,i}:=u(P_{n-1})^{m-1}\left(\sum_{j=1}^{i}\dfrac{{(-1)^{j+1}}}{(i-j)!}t^{i-j}P_{n-j}\right)\partial_u,\qquad i=1,\dots,n
	\end{equation} define $n$ generalized symmetries of the $n$th-order equation $P_n=0.$
\end{thm}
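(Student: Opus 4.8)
The plan is to invoke the reduction principle recalled immediately before the theorem: if $\rho=\rho(t,u^{(n-1)})$ satisfies $\mathbf{A}_n(\rho)/\rho=\lambda$ with $\lambda=u_1/u-km\,u^m$, then $\rho\,\partial_u$ is a generalized symmetry of $P_n=0$. Accordingly, for each fixed $i$ with $1\le i\le n$ I would write $\mathbf{w}_{n,i}=\rho_{n,i}\partial_u$ with
\[
\rho_{n,i}:=u\,(P_{n-1})^{m-1}\,S_i,\qquad S_i:=\sum_{j=1}^{i}\frac{(-1)^{j+1}}{(i-j)!}\,t^{i-j}P_{n-j},
\]
first noting that $\rho_{n,i}$ depends only on $t$ and $u^{(n-1)}$ (each $P_{n-j}$ with $j\ge 1$ involves derivatives of $u$ of order at most $n-1$), and then checking $\mathbf{A}_n(\rho_{n,i})=\lambda\,\rho_{n,i}$ on the manifold $P_n=0$.

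The whole computation rests on one elementary identity. From $P_{n-j+1}=(\mathbf{D}_t+ku^m)(P_{n-j})$ one gets $\mathbf{A}_n(P_{n-j})=P_{n-j+1}-ku^m P_{n-j}$, which holds identically for $j\ge 2$ and, for $j=1$, becomes $\mathbf{A}_n(P_{n-1})=P_n-ku^m P_{n-1}=-ku^m P_{n-1}$ on the solution manifold. In particular $\mathbf{A}_n(u)=u_1$ and $\mathbf{A}_n\big((P_{n-1})^{m-1}\big)=-(m-1)ku^m(P_{n-1})^{m-1}$ there. Applying the Leibniz rule to $\mathbf{A}_n(\rho_{n,i})$ and collecting terms, the desired equality, after dividing out the common factor $u(P_{n-1})^{m-1}$, collapses to the single claim
\[
\mathbf{A}_n(S_i)=-ku^m\,S_i .
\]

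To prove this I would differentiate $S_i$ termwise: $\mathbf{A}_n\big(t^{i-j}P_{n-j}\big)=(i-j)t^{i-j-1}P_{n-j}+t^{i-j}\big(P_{n-j+1}-ku^m P_{n-j}\big)$. The pieces $-ku^m t^{i-j}P_{n-j}$ reassemble exactly into $-ku^m S_i$, so it suffices to show that the ``power-rule'' sum $A:=\sum_{j=1}^{i}\frac{(-1)^{j+1}}{(i-j)!}(i-j)t^{i-j-1}P_{n-j}$ and the ``shift'' sum $B:=\sum_{j=1}^{i}\frac{(-1)^{j+1}}{(i-j)!}t^{i-j}P_{n-j+1}$ add up to zero. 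Using $(i-j)/(i-j)!=1/(i-j-1)!$ (so the $j=i$ term of $A$ drops out) and re-indexing $B$ via $\ell=j-1$, one finds that $A$ and $B$ coincide term by term up to an overall sign, except for the $\ell=0$ term of $B$, which equals $\frac{t^{i-1}}{(i-1)!}P_n$ and therefore vanishes on $P_n=0$. Hence $A+B=0$ on the manifold and $\mathbf{A}_n(S_i)=-ku^m S_i$, which yields the theorem for every $i=1,\dots,n$.

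I do not expect a genuine obstacle here: the argument is a bookkeeping exercise built from the recursion $P_{n-j+1}=(\mathbf{D}_t+ku^m)(P_{n-j})$ and the Leibniz rule, and the only delicate point is the re-indexing in the last step. It is worth emphasising that the hypothesis $P_n=0$ is used in exactly two places — to annihilate the $P_n$-term in $\mathbf{A}_n(P_{n-1})$ and to annihilate the $\ell=0$ term of $B$ — which is precisely why the $\mathbf{w}_{n,i}$ are symmetries only ``on shell'' rather than identically.
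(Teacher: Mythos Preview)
Your proof is correct and follows essentially the same approach as the paper: both define $\rho_{n,i}=u(P_{n-1})^{m-1}H_i$ (your $S_i$ is the paper's $H_i$), invoke the criterion $\mathbf{A}_n(\rho)/\rho=\lambda$, split the logarithmic derivative as $u_1/u-(m-1)ku^m+\mathbf{A}_n(H_i)/H_i$, and reduce everything to showing $\mathbf{A}_n(H_i)=-ku^m H_i$ via the recursion identities for $\mathbf{A}_n(P_{n-j})$. The only difference is that the paper labels this last step ``a straightforward calculation,'' whereas you actually carry out the re-indexing that exhibits the telescoping cancellation $A+B=0$; your added remark about where $P_n=0$ is used is a nice clarification but not a departure in method.
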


\begin{proof}{\rm
		For $i=1,2,\cdots,n,$ let $\rho_{n,i}$ be the functions  defined by \begin{equation}\label{rhoi}
		\rho_{n,i}:=u(P_{n-1})^{m-1}H_i,  \quad \mbox{where} \quad H_i=\sum_{j=1}^{i}\dfrac{{(-1)^{j+1}}}{(i-j)!}t^{i-j}P_{n-j}.
		\end{equation} %where $$H_i=\sum_{j=1}^{i}\dfrac{{(-1)^{j+1}}}{(i-j)!}t^{i-j}P_{n-j}.$$
		By using the identities
		\begin{equation}\label{relaciones}
		\begin{array}{l}
		\mathbf{A}_n(P_{i-1})=P_i-ku^mP_{i-1} \quad \mbox{for} \quad i=1,2,\dots,n-1,\\
		\mathbf{A}_n(P_{n-1})=-ku^mP_{n-1},% \quad \mbox{mod}\quad P_n=0,
		\end{array}
		\end{equation}
		which follow immediately from \eqref{cadenaparticular}, a straightforward calculation leads to
		\begin{equation}
		\label{salelambda}
		\begin{array}{lll}
		\dfrac{\mathbf{A}_n(\rho_{n,i})}{\rho_{n,i}}&=&\dfrac{\mathbf{A}_n(u)}{u}+\dfrac{\mathbf{A}_n\bigl((P_{n-1})^{m-1}\bigr)}{(P_{n-1})^{m-1}}+\dfrac{\mathbf{A}_n(H_i)}{H_i}\\[2 ex]
		&=& \dfrac{u_1}{u}-k(m-1)u^m-ku^m=\lambda.
		\end{array}
		\end{equation}
		This  shows that the functions given in (\ref{rhoi}) satisfy (\ref{rho}). The theorem follows immediately from the discussion included at the beginning of this section.
	}\end{proof}

\begin{rmk}\label{simrecursivas}
	{\rm It can be easily checked that the functions given in (\ref{rhoi})   satisfy $\partial_t(\rho_{n,i})=\rho_{n,i-1},$ for $1< i\leq n.$ Therefore,  the $n-1$ generalized   symmetries $\mathbf{w}_{n,i}=\rho_{n,i}\partial_u,$ for $1\leq i\leq n-1,$ can be directly determined from $ \mathbf{w}_{n,n}=\rho_{n,n}\partial_u$
		by using successive derivations with respect to $t,$ because $\partial_t^{i}(\rho_{n,n})=\rho_{n,n-i}.$   }
\end{rmk}
\section{First integrals derived from the generalized symmetries (\ref{formula})}\label{section3}

Let $\rho_1$ and $\rho_2$ be any functions satisfying (\ref{rho}). Then
\begin{equation}\label{cociente}
\dfrac{\mathbf{A}_n\left({\rho_1}/{\rho_2}\right)}{{\rho_1}/{\rho_2}}%=\dfrac{\rho_2\mathbf{D}_t(\rho_1)-{\rho_1}\mathbf{D}_t(\rho_2)}{\rho_1\rho_2}
=\dfrac{\mathbf{A}_n(\rho_1)}{\rho_1}-\dfrac{\mathbf{A}_n(\rho_2)}{\rho_2}=0. %\quad\mbox{mod}\quad  (\mathbf{D}_t+ku^m)^n(u)=0.
\end{equation} This implies that ${\rho_1}/{\rho_2}$ is a first integral for the $n$th-order equation (\ref{Pn}).
As a consequence of Theorem \ref{teorema1}, it follows that the functions
\begin{equation}
\label{ipAbel}
I_{(n;i,j)}:=\dfrac{\rho_{n,i}}{\rho_{n,j}}=\dfrac{H_i}{H_j},\quad i,j\in\{1,2,\dots,n\}
\end{equation}
are  first integrals of equation (\ref{Pn}).
In order to construct a complete set of first integrals, we can consider, for instance, the  $n-1$ first integrals:
\begin{equation}
\label{ip}
I_{(n;i)}:=\dfrac{\rho_{n,i}}{\rho_{n,1}}
%=\frac{n!}{i!}\dfrac{\displaystyle\sum_{r=1}^{i}{(-1)^{r+1}}\dfrac{d^{r}t^{i}}{dt^r}P_{n-r}}{\dfrac{d^{1}t^1}{dt^1}P_{n-1}}
=\displaystyle\sum_{r=1}^{i}\dfrac{{(-1)^{r+1}}}{(i-r)!}t^{i-r}Q_{n-r}  \quad (2\leq i\leq n),
%=it^{i-1}+\displaystyle\sum_{r=2}^{i}{(-1)^{r+1}}\dfrac{d^{r}t^{i}}{dt^r}\dfrac{P_{n-r}}{P_{n-1}},
\end{equation}
where $Q_{j}:={P_{j}}/{P_{n-1}},$  for $0\leq j\leq n-1.$

%\begin{rmk}\label{intrecursivas}
{\rm We observe that the functions (\ref{ip}) satisfy the relation
	$\partial_t(I_{(n;i)})=I_{(n;i-1)}, $ for $2< i\leq n;$ therefore
	the first integrals $I_{(n,2)},\ldots,I_{(n,n-1)}$ can be generated by successive derivations with respect to $t$  of  $I_{(n;n)}$ as follows:
	\begin{equation}
	\label{Jder}I_{(n;j)}=\partial_t^{n-j}(I_{(n;n)})\quad (2\leq j\leq n-1).\end{equation}    }
%\end{rmk}

In the next theorem we prove that the $n-1$ first integrals
(\ref{ip}) are functionally independent:
\begin{thm}\label{teofirstintegrals}
	For $n\geq 2,$ the  functions defined by  (\ref{ip}) are $n-1$ functionally independent first integrals of the  $n$th-order equation (\ref{Pn}).
\end{thm}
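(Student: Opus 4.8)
The plan is to show that the map $(t, u^{(n-1)}) \mapsto \bigl(I_{(n;2)}, \dots, I_{(n;n)}\bigr)$ has maximal rank $n-1$ on the solution manifold of $P_n = 0$, which is equivalent to functional independence. A clean way to organize this is to work with the auxiliary quantities $Q_j = P_j / P_{n-1}$ for $0 \le j \le n-1$ appearing in \eqref{ip}. Note $Q_{n-1} = 1$ identically, while $Q_0, \dots, Q_{n-2}$ are $n-1$ functions of $(t, u^{(n-1)})$. I would first argue that $Q_0, \dots, Q_{n-2}$ are themselves functionally independent: the numerators $P_0 = u, P_1, \dots, P_{n-2}$ together with $P_{n-1}$ are obtained by successively applying $\mathbf{D}_t + k u^m$, so $P_j$ depends genuinely on $u_j$ (with $\partial_{u_j} P_j = 1$) and on no higher derivative; hence the Jacobian of $(P_0, \dots, P_{n-1})$ with respect to $(u, u_1, \dots, u_{n-1})$ is lower-triangular with $1$'s on the diagonal, so these $n$ functions are independent, and dividing the first $n-1$ by the last preserves independence.

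Next I would pass from the $Q_j$ to the $I_{(n;i)}$. From \eqref{ip} we have $I_{(n;i)} = \sum_{r=1}^{i} \frac{(-1)^{r+1}}{(i-r)!} t^{i-r} Q_{n-r}$, so for fixed $t$ the vector $(I_{(n;2)}, \dots, I_{(n;n)})$ is obtained from $(Q_{n-2}, Q_{n-3}, \dots, Q_0)$ — together with the constant $Q_{n-1}=1$ which only contributes $t$-dependent terms — by an invertible unitriangular linear transformation with entries polynomial in $t$. Concretely, the coefficient of $Q_{n-r}$ in $I_{(n;i)}$ is $\frac{(-1)^{r+1}}{(i-r)!} t^{i-r}$, which vanishes for $r > i$ and equals $1$ for $r = i$; ordering the $I$'s by increasing $i$ and the $Q$'s as $Q_{n-2}, Q_{n-3}, \dots$, the transition matrix is triangular with unit diagonal, hence invertible for every value of $t$. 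Since this linear map is a diffeomorphism (in the fibre variables, for each frozen $t$), and since adjoining the extra independent coordinate $t$ on both sides, functional independence of the $Q_{n-2}, \dots, Q_0$ (equivalently of $Q_0, \dots, Q_{n-2}$) transfers to functional independence of $I_{(n;2)}, \dots, I_{(n;n)}$.

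Finally, I would record that each $I_{(n;i)}$ is a first integral of $P_n = 0$: this is immediate from \eqref{cociente}, since $I_{(n;i)} = \rho_{n,i}/\rho_{n,1}$ and both $\rho_{n,i}, \rho_{n,1}$ satisfy \eqref{rho} by Theorem \ref{teorema1}. Combining this with the independence just established gives $n-1$ functionally independent first integrals, as claimed. The main obstacle — and the only point requiring genuine care rather than bookkeeping — is the independence of the $Q_j$ (equivalently the rank computation for the $P_j$): one must make precise that $P_j$ truly involves $u_j$ with nonvanishing coefficient and carry the triangularity argument through the division by $P_{n-1}$ without accidentally introducing a degeneracy on the solution manifold; restricting attention to the open set where $P_{n-1} \ne 0$ handles the latter, and the explicit recursion \eqref{relaciones} together with $P_j = u_j + (\text{lower-order terms})$ handles the former.
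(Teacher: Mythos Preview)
Your argument is correct and takes a genuinely different route from the paper. One small slip: the diagonal entries of your transition matrix are $(-1)^{i+1}$, not $1$ (so the matrix is triangular with $\pm 1$ on the diagonal, not unitriangular); this is harmless for invertibility, but worth correcting.

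The paper proceeds by a direct determinant computation. It takes the $(n-1)\times(n-1)$ submatrix $\bar{\mathcal{M}}$ of the Jacobian $\partial(I_{(n;2)},\dots,I_{(n;n)})/\partial(t,u,\dots,u_{n-1})$ obtained by dropping the $t$-column, then performs successive row operations (using $\partial_t I_{(n;i)}=I_{(n;i-1)}$) to eliminate the $t$-dependence, followed by column operations to reach a matrix whose determinant is read off explicitly. The outcome is the precise value
\[
\delta_n=\det(\bar{\mathcal{M}})=\dfrac{P_0}{(P_{n-1})^{n}}\neq 0.
\]
Your approach instead factors the problem: first the triangular Jacobian $\partial(P_0,\dots,P_{n-1})/\partial(u,\dots,u_{n-1})$ gives independence of the $P_j$ and hence of the $Q_j$, and then the invertible affine change $(Q_{n-2},\dots,Q_0)\mapsto(I_{(n;2)},\dots,I_{(n;n)})$ transfers independence. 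This is cleaner and more conceptual for the bare independence statement. However, the paper's explicit value of $\delta_n$ is not a by-product of your method, and it is actually used downstream in Section~\ref{section4} (equation~\eqref{mu0}) to build the integrating factor $\mu=(P_{n-1})^{-(n+m)}/\delta_n$ from the Jacobi last multiplier. So while your proof of Theorem~\ref{teofirstintegrals} stands on its own, you would still need to supply $\delta_n$ separately before proceeding to the quadrature in Section~\ref{section4}; in fact your triangular observations make this easy, since $\det\bigl(\partial P_j/\partial u_l\bigr)=1$ and $\det\bigl(\partial Q_j/\partial u_l\bigr)$ differ from it by the Jacobian of the map $(P_0,\dots,P_{n-1})\mapsto(P_0/P_{n-1},\dots,P_{n-2}/P_{n-1})$ restricted appropriately.
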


\begin{proof}
	{\rm   Let   \begin{equation}\label{matrizM}
		{\cal{M}}=\dfrac{\partial(I_{(n;2)},\ldots, I_{(n;n)})}{\partial(t,u,u_1,\ldots,u_{n-1})}.
		\end{equation} denote  the  Jacobian matrix associated to the functions (\ref{ip}). Let ${\bar{\cal{M}}}$  be the square submatrix of ${\cal{M}}$   formed by its  last $n-1$ columns. Our goal is to check  that $\mbox{det}({\bar{\cal{M}}})\neq 0,$ in order to prove that the rank of  the Jacobian matrix ${\cal{M}}$ is $n-1.$
		
		The elements of the last row $R_{n-1}$ of  ${\bar{\cal{M}}}$ can be written in the form $\partial_{u_j}(I_{(n;n)})=a_{j;n-2}t^{n-2}+\cdots+a_{j;0},$ for $1\leq j\leq n-1,$    where
		\[	a_{j;n-l}=\dfrac{(-1)^{l-1}}{(n-l)!}\partial_{u_{j}}(Q_{n-l}),\quad \mbox{for}\quad 2\leq l\leq n.\]
		
		According to (\ref{Jder}), the elements of the  row $R_{n-2}$  are obtained by deriving the elements of $R_{n-1}$ with respect to $t$  and so on.  Consequently, for $1\leq i\leq n-1,$ the elements of the row $R_{i}$ of   ${\bar{\cal{M}}}$   are polynomials on $t$ of degree $i-1,$ and their respective constant terms, which will be used later, become
		\begin{equation}\label{terminosctes}
		a_{j;n-i-1}=(-1)^i\partial_{u_{j}}(Q_{n-i-1}),\quad \mbox{for}\quad 1\leq j\leq n-1.
		\end{equation}
		In order to cancel out the leading terms of the polynomials of the row $R_{i}$ of ${\bar{\cal{M}}}$  for $ 2\leq i\leq n-1,$ we replace  $R_i$ by
		\begin{equation}\label{rowi}
		R_{i}^{(1)}=R_{i}-\dfrac{t}{i-1}  R_{i-1}, %\quad \mbox{for} \quad 2\leq i\leq n-1.
		\end{equation} and denote   the resulting matrix by ${\bar{\cal{M}}}^{(1)}$. Clearly $\mbox{det}({\bar{\cal{M}}}^{(1)} )= \mbox{det}({\bar{\cal{M}}}),$ and  it is easy to check  that  all the elements of the new row $R_{i}^{(1)},$ for $i\geq 2,$ are polynomials on $t$ of degree $i-2$ and whose respective constant terms remain as in (\ref{terminosctes}).
		
		This process can be successively repeated until  getting  a matrix  ${\bar{\cal{M}}}^{(n-2)}$ whose elements do not depend on $t,$ and are given by (\ref{terminosctes}):
		\begin{equation}\label{matrizfinal}
		%\delta=(-1)^n\mbox{det}
		{\bar{\cal{M}}}^{(n-2)} =\Bigl((-1)^i\partial_{u_{j}}(Q_{n-i-1})
		\Bigr), \quad \mbox{for}\quad 1\leq i,j\leq n-1.
		\end{equation}
		
		Let $C_j$  denote the columns of  the matrix $\Bigl(\partial_{u_{j}}(Q_{n-i-1})
		\Bigr),$ for $1\leq i,j\leq n-1.$    By replacing the column $C_j$ by $C_j'$ where
		\begin{equation}\label{columnasnuevas}
		\begin{array}{l}
		C_j'=P_{n-1}\bigl(C_j-\partial_{u_{j}}(P_{n-1})C_{n-1}\bigr), \quad (1\leq j< n-1,n>2),\\C_{n-1}'=-P_{n-1}^2C_{n-1}
		\end{array}
		\end{equation} it can be checked % by using (\ref{cuentasQ3}),
		that matrix $\Bigl(\partial_{u_{j}}(Q_{n-i-1})
		\Bigr)$ is transformed into
		\begin{equation}\label{matrizfinalfinal}
		\left(
		\begin{array}{cccc}
		\partial_{u_1}(P_{n-2}) & \cdots &  1&  P_{n-2}\\
		\vdots& \ddots &\vdots &\vdots\\
		
		1
		& \cdots &  0&P_1\\
		0& \cdots &  0& P_0
		\end{array}
		\right).
		\end{equation} The  determinant of matrix (\ref{matrizfinalfinal}) is $P_0$ for  $n=2,3,$ and $(-1)^{k+1}P_0$ where $k>1$ is such that $n=2k$ or $n=2k+1.$
		By taking (\ref{matrizfinal}) and (\ref{columnasnuevas}) into account
		we finally deduce  that \begin{equation}
		\label{determinante}
		\delta_n:=\mbox{det}(\bar{\mathcal{M}})=\dfrac{P_0}{(P_{n-1})^n}\neq 0.\end{equation}  Therefore the rank of  matrix ${\mathcal{M}}$ is $n-1,$ which proves the theorem.}	
\end{proof}

\section{Complete integrability by using  Jacobi last multipliers}\label{section4}

In Theorem \ref{teofirstintegrals}, $n-1$ functionally independent first integrals of the $n$th-order equation $P_n=0$ have been obtained. It is clear that any other first integral of the form (\ref{ipAbel}) is functionally dependent of the mentioned  $n-1$ first integrals, because $I_{(n;k,l)}={I_{(n;k)}}/{I_{(n;l)}},$ for $1\leq l,k\leq n.$ In order to  integrate completely equation (\ref{Pn}), in this section we discuss how the Jacobi last multiplier (\ref{jlm}) can be used to find a remaining first integral.
With this objective we follow the next steps:
\begin{enumerate}
	\item As a consequence of Theorem  \ref{teofirstintegrals} the set
	\begin{equation}
	\Delta=\{(t,u^{(n-1)})\in M^{(n-1)}:I_{(n;j)}(t,u^{(n-1)})=c_{(n;j)},2\leq j\leq n\},
	\end{equation} where the $c_{(n;j)}$ are constants, is a submanifold of $M^{(n-1)}$ of dimension $2.$
	It follows from (\ref{determinante}) and the Implicit Function Theorem that $\Delta$ can be locally parametrized by $(t,u)$ as follows:
	$$\{(t,u^{(n-1)})\in M^{(n-1)}: u_j=G_{(n;j)}(t,u,c_{(n;2)},\dots,c_{(n;n)}), 2\leq j\leq n-2\},$$ for some smooth functions $G_{(n;j)}.$ The first order ODE \begin{equation}
	\label{auxiliar}
	u_1=G_{(n;1)}(t,u,c_{(n;2)},\dots,c_{(n;n)})
	\end{equation}  will be called the auxiliary equation. Next we deduce the explicit expression of this  auxiliary equation.
	
	The equations $I_{(n;i)}(t,u^{(n-1)})=c_{(n;i)},$ for $2\leq i\leq n-1,$ can be written in matrix form as follows
	\begin{equation}\label{sistema}
	\mathcal{N}_{n-1} \left(\begin{array}{l}
	P_{n-1}\\P_{n-2}\\\vdots\\P_2\\P_1
	\end{array}
	\right)=
	\left(\begin{array}{c}
	0\\0\\ \vdots\\0\\(-1)^{n}P_0
	\end{array}
	\right)
	\end{equation}
	where
	$$\mathcal{N}_{n-1}=\left(\begin{array}{cccc}
	t-c_{(n;2)}	& -1	    & \cdots  & 0 \\
	
	\dfrac{t^2}{2!}-c_{(n;3)}&  -t &\cdots   &  0\\
	
	\dfrac{t^3}{3!}-c_{(n;4)}&  -\dfrac{t^2}{2!} &\ \cdots  & 0 \\
	
	\vdots&  \vdots&   \ddots & \vdots \\
	%c_{n-1}-(n-1)t^{n-2}& (n-1)(n-2)t^{n-3}&\cdots   & (-1)^{n-1}(n-1)! \\
	\dfrac{t^{n-1}}{(n-1)!}-c_{(n;n)}& -\dfrac{t^{n-2}}{(n-2)!}&\cdots   & (-1)^nt \\
	\end{array} \right), (n\geq 2).$$
	It can be checked that   ${\mbox{det}(\mathcal{N}_{n-1})}$ is a  polynomial of $t$ of degree $n-1:$
	\begin{equation}\label{polinomios}
	S_{n-1}(t):={\mbox{det}(\mathcal{N}_{n-1})}=\sigma_n\Bigl(\dfrac{t^{n-1}}{(n-1)!}+\displaystyle\sum_{r=2}^{n}\dfrac{{(-1)^{r+1}}}{(n-r)!}t^{n-r}c_{(n;r)}\Bigr),
	\end{equation} where %$\sigma_n$ and by setting
	\begin{equation}\label{c}\sigma_n=\left\{\begin{array}{lll}
	(-1)^{k+1} & \mbox{for} & n=2k,\\
	(-1)^{k} & \mbox{for} & n=2k+1.
	\end{array}\right.\end{equation}
	%has been defined in (\ref{c}) and $c_{1}=1.$
	The following relation between the determinants of $\mathcal{N}_{n-1}$ and $\mathcal{N}_{n-2}$ can be easily proved:
	%\begin{equation}\label{relaciondet}
	%\dfrac{d}{dt}\left(\mbox{det}(\mathcal{N}_{n-1})\right)=(n-1)            \mbox{det}(\mathcal{N}_{n-2}).
	%\end{equation}
	\begin{equation}\label{relaciondet}
	S_{n-1}'(t)=(-1)^nS_{n-2}(t).
	\end{equation}
	It follows from (\ref{sistema}) and (\ref{relaciondet})
	that
	$$\begin{array}{lll}
	P_1{\Bigl|_\Delta\Bigr.}&=&(-1)^{n}P_0\dfrac{{\mbox{det}(\mathcal{N}_{n-2})}}{{\mbox{det}(\mathcal{N}_{n-1})}}= P_0\,\dfrac{S_{n-1}'(t)}{S_{n-1}(t)}.\end{array}$$ Therefore the auxiliary equation (\ref{auxiliar}) becomes
	\begin{equation}\label{auxiliarfin}
	u_1=-ku^{m+1}+u \, \dfrac{S_{n-1}'(t)}{S_{n-1}(t)}.
	\end{equation}	
	This is a Bernoulli equation and it can be solved by a well-known procedure. Nevertheless, in what follows, we provide explicitly  its general solution by determining an integrating factor and an associated first integral.
	\item  According to   \cite[Corollary 6]{muriel2014lambdaJacobi}, an integrating factor of the auxiliary equation (\ref{auxiliar}) is the function
	$\mu=\mu(t,u,c_{(n;2)},\cdots,c_{(n;n)})$ defined by
	\begin{equation}
	\label{mu0}
	\mu:=\dfrac{(P_{n-1})^{-(n+m)}}{\delta_n}{\Bigl|_\Delta\Bigr.}=\dfrac{(P_{n-1})^{-m}}{ P_0}{\Bigl|_\Delta\Bigr.},
	\end{equation} where $\delta_n$ has been defined in (\ref{determinante}). The restriction of $P_{n-1}$ to $\Delta$ can be calculated from (\ref{sistema}) as
	$$ P_{n-1}{\Bigl|_\Delta\Bigr.}=\dfrac{P_0}	{\mbox{det}(\mathcal{N}_{n-1})}=\dfrac{P_{0}}{S_{n-1}(t)}.$$
	Therefore,  the integrating factor (\ref{mu0}) of equation (\ref{auxiliarfin}) becomes
	$${\mu}=u^{-(m+1)}\bigl(S_{n-1}(t)\bigr)^m.$$
	\item 	A corresponding  first integral $H=H(t,u,c_{(n;2)},\cdots,c_{(n;n)})$ of the exact first order equation ${\mu}(u_1-G_{(n;1)})=0$ can be calculated by a single quadrature and becomes:
	\begin{equation}\label{H}
	H:=u^{-m}\left(S_{n-1}(t)\right)^{m}-km\int\left(S_{n-1}(t)\right)^mdt.
	\end{equation}
	Consequently, the function $I_{(n;1)}=I_{(n;1)}(t,u^{(n-1)})$ defined by  \begin{equation}
	\label{ultima}
	I_{(n;1)}=H(t,u,I_{(n;2)},\dots, I_{(n;n)})
	\end{equation} is a first integral of the equation $P_n=0.$ Furthermore, the functions in $\{I_{(n;j)}\}_{j=1}^n$ are $n$ functionally independent first integrals of the equation $P_n=0.$	
	\item By setting $H=c_{(n;1)}$ in (\ref{H}), where $c_{(n;1)}\in \mathbb{R},$  the general solution of the equation $P_n=0$ can be expressed as
	\begin{equation}\label{chachi}
	\begin{array}{l}
	%u^{-m}\left(S_{n-1}(t)\right)^{m}-km\int\left(S_{n-1}(t)\right)^mdt=c_1,\\{}\\\newline
	u(t)^m=\dfrac{\left(S_{n-1}(t)\right)^{m}}{km\displaystyle\int\left(S_{n-1}(t)\right)^mdt+c_{(n;1)}}.
	%u(t)=\left(\dfrac{\left(S_{n-1}(t)\right)^{m}}{km\int\left(S_{n-1}(t)\right)^mdt+c_1}\right)^{1/m}\\
	\end{array}\end{equation}
	
\end{enumerate}

Equation (\ref{chachi}) provides a formula to solve directly the $n$th-order equation $P_n=0$ by using the polynomials (\ref{polinomios}). These polynomials are the same for all the chains, because they are independent of the constants $k$ and $m.$ In order to simplify the expression of the polynomials (\ref{polinomios}) we set
$$\begin{array}{l}
C_{1}:=\sigma_n(n-1)!c_{(n;1)},\\
C_{r} :=(-1)^{r+1}\sigma_n\dfrac{(n-1)!}{(n-r)!}c_{(n;r)},\quad \mbox{for} \quad 2\leq r\leq n,
\end{array}$$
%$$T_{n-1}(t)=t^{n-1}+C_1t^{n-1}+\dots+c_{n-r}t^{r}+\dots +C_{n-1}t+C_{n},$$ where $\quad C_i\in \mathbb{R}$ for $1\leq r\leq n,$
which permits to express the general solution of (\ref{Pn}) in terms of arbitrary polynomials of order $n-1,$ as it is stated in the next theorem for further reference:

\begin{thm}\label{teoremasol}
	Let  $P_0:=u$ and  $P_n:=(\mathbf{D}_t+ku^m)(P_{n-1})$ for $n\geq 1.$  	The general solution of the  equation $P_n=0$ is
	\begin{equation}
	\label{solucion}
	u(t)^m=\dfrac{\left(T_{n-1}(t)\right)^{m}}{km\displaystyle\int\left(T_{n-1}(t)\right)^mdt+C_{1}},
	\end{equation} where
	\begin{equation}\label{T}
	T_{n-1}(t)=t^{n-1}+C_2t^{n-1}+\dots+C_{n-1}t+C_{n},
	\end{equation} and  $C_i\in \mathbb{R}$ for $1\leq r\leq n.$
\end{thm}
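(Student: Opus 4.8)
The plan is to obtain Theorem \ref{teoremasol} as a purely cosmetic reformulation of formula (\ref{chachi}), which has already been derived in steps 1--4 of this section. Recall that those steps show that imposing $I_{(n;j)}=c_{(n;j)}$ for $j=1,\dots,n$ and eliminating $u_1,\dots,u_{n-1}$ produces (\ref{chachi}); since $\{I_{(n;j)}\}_{j=1}^{n}$ is a complete set of $n$ functionally independent first integrals of the $n$th-order equation $P_n=0$ (Theorem \ref{teofirstintegrals} together with (\ref{ultima})), the family (\ref{chachi}), parametrized by $(c_{(n;1)},\dots,c_{(n;n)})\in\mathbb{R}^n$, is precisely the general solution. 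Hence all that remains is to replace the determinant polynomial $S_{n-1}(t)$ of (\ref{polinomios}) by an arbitrary monic polynomial of degree $n-1$, while keeping careful track of the constants.

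First I would normalise $S_{n-1}$: by (\ref{polinomios})--(\ref{c}) its leading coefficient is $\sigma_n/(n-1)!$ with $\sigma_n\in\{\pm1\}$, so $T_{n-1}(t):=\sigma_n(n-1)!\,S_{n-1}(t)$ is monic of degree $n-1$, and renaming its lower-order coefficients $C_2,\dots,C_n$ (the relabelling displayed just before the theorem statement) puts $T_{n-1}$ in the form (\ref{T}). Next I would substitute $S_{n-1}=\bigl(\sigma_n(n-1)!\bigr)^{-1}T_{n-1}$ into (\ref{chachi}): the constant factor $\bigl(\sigma_n(n-1)!\bigr)^{m}$ comes out of $(S_{n-1})^m$ in the numerator and out of $\int(S_{n-1})^m\,dt$ in the denominator, and cancelling it turns (\ref{chachi}) into (\ref{solucion}) with $C_1$ the corresponding nonzero multiple of $c_{(n;1)}$ (the freedom in the antiderivative being absorbed into $C_1$ as well). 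Finally I would observe that $(c_{(n;1)},\dots,c_{(n;n)})\mapsto(C_1,\dots,C_n)$ is a bijection of $\mathbb{R}^n$ --- each $C_i$ is a nonzero scalar multiple of a single $c_{(n;j)}$ --- so every monic polynomial (\ref{T}) and every $C_1\in\mathbb{R}$ is attained. Therefore (\ref{solucion}) describes exactly the same $n$-parameter family as (\ref{chachi}), i.e.\ the general solution of $P_n=0$, which would conclude the proof.

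I do not expect a genuine obstacle: the argument is pure bookkeeping. The only points requiring a little attention are getting the sign $\sigma_n$ of (\ref{c}) right so that $T_{n-1}$ is truly monic and the constants $C_i$ agree with the stated substitutions; checking the cancellation of $\bigl(\sigma_n(n-1)!\bigr)^{m}$, which uses only $\int(\alpha f)^m\,dt=\alpha^m\int f^m\,dt$ for a nonzero constant $\alpha$ and is harmless for any $m\in\mathbb{Z}$ with $m\neq0$; and recalling that, just as in step 1, identity (\ref{solucion}) is to be read locally --- on any open set where $u(t)\neq0$, $T_{n-1}(t)\neq0$, and the denominator in (\ref{solucion}) does not vanish --- which is the natural domain when $m<0$ and is consistent with the Implicit Function Theorem used to derive the auxiliary equation (\ref{auxiliarfin}).
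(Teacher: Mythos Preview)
Your proposal is correct and follows essentially the same approach as the paper: the paper does not give a separate proof of Theorem~\ref{teoremasol} but simply introduces the substitutions $C_{1}:=\sigma_n(n-1)!\,c_{(n;1)}$ and $C_{r}:=(-1)^{r+1}\sigma_n\dfrac{(n-1)!}{(n-r)!}\,c_{(n;r)}$ just before the statement, treating the theorem as a restatement of (\ref{chachi}) in terms of the monic polynomial $T_{n-1}$. Your write-up makes explicit the cancellation of the factor $\bigl(\sigma_n(n-1)!\bigr)^{m}$ and the bijectivity of the reparametrization, which the paper leaves implicit.
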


In the following two sections we apply  these results to the Riccati and Abel chains.

%\section{Applications to the Riccati and Abel chains}
\section{The Riccati chain}\label{seccionR}
The chain (\ref{cadenageneral}) generated by the function $g$  given by $g(u)=ku$, $u\in\mathbb{R}$, is known as the Riccati chain  of parameter $k\in \mathbb{R}$. We set %$R_0=u$ and  $R_i=(\mathbf{D}_t+ku)(R_{i-1}),$ for $i\geq 1.$
\begin{equation}
\label{Ri}
R_0:=u\quad \mbox{and} \quad  R_i:=(\mathbf{D}_t+ku)(R_{i-1}),\quad i\geq 1.
\end{equation}
The equations in the Riccati chain are usually known as the higher-order Riccati equations. The  first four terms of this sequence define the ODEs displayed in Table \ref{tablagsR}:

\begin{figure}[htbp]
	\caption{Four first equations in the Riccati chain}\label{tablagsR}
	\centering  $ \begin{array}{c|l}
	\hline
	n & R_n=0   \\ \hline
	1    &  u_1+k u^2=0   \\
	2   &  u_2+k^2 u^3+3 k u u_1=0\\
	3   &u_3+ k \left(k^2 u^4+6 k u^2 u_1+4 u u_2+3 u_1^2\right)=0   \\
	4   & u_4+k \left(k^3 u^5+10 k^2 u^3 u_1+10 k u^2 u_2+5 u \left(3 k u_1^2+u_3\right)+10 u_1
	u_2\right)=0
	\end{array} $
\end{figure}

 \strut\vfill
%
% \pagebreak
%

\subsection{Generalized symmetries and  first integrals for the Riccati chain}

Theorem \ref{teorema1} can be used to determine $n$ generalized symmetries of the $n$th-order equation in the  considered chain. In this section we derive an additional generalized symmetry for the $n$th-order equation of the Riccati chain.
The term
\begin{equation}
\label{-1}
R_{-1}:=\dfrac{1}{k} % \quad \mbox{that satifies} \quad (\mathbf{D}_t+ku)(R_{-1})=R_0. %\quad\mbox{and}\quad A_{-1}:=\dfrac{1}{2\,k\,u}.
\end{equation}  satisfies  \begin{equation}\label{relaciones0}
(\mathbf{D}_t+ku)(R_{-1})=R_0,
\end{equation}   which corresponds to (\ref{relaciones}) for $i=0.$ This suggests that the term (\ref{-1}) can be used   to  derive a new generalized symmetry for the Riccati chain:

\begin{thm}\label{teoremasimadicional}
	For $i\geq -1,$ let $R_i$ be the functions given in (\ref{Ri}) and  (\ref{-1}).
	The vector field
	\begin{equation}
	\label{nuevasR}
	\mathbf{w_R}_{n,n+1}:=u \left(\sum_{j=1}^{n+1}\dfrac{{(-1)^{j+1}}}{(n+1-j)!}t^{n+1-j}R_{n-j}\right)\partial_u
	\end{equation}
	is a generalized symmetry of the $n$th-order equation $R_n=0$ of the Riccati chain, for $n\geq 1.$
\end{thm}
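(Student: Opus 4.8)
The plan is to mimic exactly the construction used in the proof of Theorem~\ref{teorema1}, but now allowing the index $i$ in the telescoping sum to run up to $n+1$, which is possible precisely because the extra term $R_{-1}=1/k$ extends the recursion~\eqref{relaciones} down to $i=0$ via~\eqref{relaciones0}. Concretely, for the Riccati chain $g(u)=ku$ we have $m=1$, so the factor $u(P_{n-1})^{m-1}$ appearing in~\eqref{formula} reduces to $u$, and $\rho_{n,i}=uH_i$ with $H_i=\sum_{j=1}^{i}\frac{(-1)^{j+1}}{(i-j)!}t^{i-j}R_{n-j}$. First I would set $\rho_{n,n+1}:=u\,H_{n+1}$ with $H_{n+1}=\sum_{j=1}^{n+1}\frac{(-1)^{j+1}}{(n+1-j)!}t^{n+1-j}R_{n-j}$, so that the claimed vector field is $\mathbf{w_R}_{n,n+1}=\rho_{n,n+1}\partial_u$. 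By the remark recorded at the beginning of Section~\ref{section2} (the result from~\cite{muriel03lie}), it suffices to verify that $\mathbf{A}_n(\rho_{n,n+1})/\rho_{n,n+1}=\lambda=u_1/u-ku$ on the manifold $R_n=0$; then $\rho_{n,n+1}\partial_u$ is automatically a generalized symmetry.

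The key computation is to show $\mathbf{A}_n(H_{n+1})=\lambda H_{n+1}-\frac{\mathbf{A}_n(u)}{u}H_{n+1}=-ku\,H_{n+1}$, i.e.\ that $H_{n+1}$ satisfies the same equation $\mathbf{A}_n(H)= -ku\,H$ that $H_i$ ($1\le i\le n$) satisfied implicitly inside~\eqref{salelambda}. Applying $\mathbf{A}_n=\mathbf{D}_t$ (restricted to $R_n=0$) term by term to $H_{n+1}$, the derivative of $t^{n+1-j}$ produces, after reindexing, exactly $H_n$, while the derivatives of the $R_{n-j}$ are governed by~\eqref{relaciones} for $n-j\ge 1$, by~\eqref{relaciones0} (the new ingredient) for $n-j=0$, i.e.\ $j=n$, and by $\mathbf{A}_n(R_n)=0$ (the equation itself) for $j=1$. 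The standard telescoping cancellation then leaves $\mathbf{A}_n(H_{n+1})=-ku\,H_{n+1}$; the only place anything new happens is the bottom term $j=n$, where $\mathbf{A}_n(R_0)=R_1-ku\,R_0$ combines with the $j=n+1$ term $R_{-1}$ through $\mathbf{A}_n(R_{-1})=0$ (since $R_{-1}=1/k$ is constant) and through $(\mathbf{D}_t+ku)(R_{-1})=R_0$ exactly as~\eqref{relaciones0} predicts. Hence $\mathbf{A}_n(\rho_{n,n+1})/\rho_{n,n+1}=\mathbf{A}_n(u)/u-ku=u_1/u-ku=\lambda$, as required.

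The main (and really only) obstacle is bookkeeping: one must check that the telescoping identity, which in Theorem~\ref{teorema1} stopped at $j=i\le n$ using the closed relation $\mathbf{A}_n(P_{n-1})=-ku^mP_{n-1}$ at the top, now runs all the way to $j=n+1$ and closes correctly at the bottom because of~\eqref{relaciones0}. An efficient way to package this, which I would adopt, is to invoke Remark~\ref{simrecursivas}: the functions $\rho_{n,i}=uH_i$ satisfy $\partial_t\rho_{n,i}=\rho_{n,i-1}$, and this relation visibly persists for $i=n+1$ since $\partial_t H_{n+1}=H_n$ by direct differentiation. Combined with the fact that $\mathbf{A}_n$ and $\partial_t$ interact in a controlled way on these functions, one reduces the $i=n+1$ case to the already-established $i=n$ case plus the single new identity~\eqref{relaciones0}. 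Either route — direct telescoping or the $\partial_t$-recursion — yields the result with no genuine difficulty; the content is entirely in recognizing that $R_{-1}=1/k$ legitimately extends the chain one step downward.
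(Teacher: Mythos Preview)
Your proposal is correct and follows essentially the same route as the paper: write $\mathbf{w_R}_{n,n+1}=\rho_{n,n+1}\partial_u$ with $\rho_{n,n+1}=uH_{n+1}$ (using $m=1$ so that $(P_{n-1})^{m-1}=1$), then verify $\mathbf{A}_n(\rho_{n,n+1})/\rho_{n,n+1}=\lambda$ exactly as in Theorem~\ref{teorema1}, with the extended recursion~\eqref{relaciones0} supplying the extra step needed to close the telescoping at the new bottom term $R_{-1}$. The paper's proof simply states that the computation goes through ``as in the proof of Theorem~\ref{teorema1}'' using~\eqref{relaciones0} and~\eqref{relacionesR}; your write-up unpacks that same computation (with a minor index slip---\eqref{relaciones0} governs the $j=n+1$ term $R_{-1}$, not the $j=n$ term $R_0$---which does not affect the argument).
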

\begin{proof}
	{\rm 	By  taking into account that $P_n=R_n$ for $m=1$, the vector field (\ref{nuevasR}) can be written in the form for   $\rho_{n,n+1}\partial_u$ where $\rho_{n,n+1}$ is the function defined in (\ref{rhoi}) but extended to $i=n+1:$
		\begin{equation}
		\label{nuevas}
		\begin{array}{l}	\rho_{n,n+1}  := u (P_{n-1})^{m-1}\displaystyle\sum_{j=1}^{n+1}\dfrac{{(-1)^{j+1}}}{(n+1-j)!}t^{n+1-j}P_{n-j} \quad (m=1).
		% \\&=& u\displaystyle\sum_{j=1}^{n+1}\dfrac{{(-1)^{j+1}}}{(n+1-j)!}t^{n+1-j}R_{n-j} &\quad (m=1).
		\end{array}
		\end{equation}
		By using (\ref{relaciones0}) and the corresponding identities (\ref{relaciones}):
		\begin{equation}\label{relacionesR}
		\begin{array}{l}
		\mathbf{A}_n(R_{i-1})=R_i-kuR_{i-1} \quad \mbox{for} \quad i=0,1,2,\dots,n-1,\\
		\mathbf{A}_n(R_{n-1})=-kuR_{n-1},
		\end{array}
		\end{equation}
		it can be  proved, as
		in the proof of Theorem \ref{teorema1}, that function (\ref{nuevas}) satisfies the corresponding condition (\ref{salelambda}) for $m=1:$
		\begin{equation}\label{salelambdaR}
		\dfrac{	\mathbf{A}_n(\rho_{n,n+1})}{\rho_{n,n+1}}=\dfrac{u_1}{u}-ku=\lambda.
		\end{equation}  The result follows immediately from the discussion at the beginning of Section \ref{section2}.
	} \end{proof}
%   2   & u(u_1+k u^3)^2\partial_u  \\
%   3   & u(u_2+k^2 u^5+4 k u_1 u^2)^2\partial_u   \\
%   4   & u\bigl(u_3 +k x (k^2 u^6+9 k u_1 u^3+5 u_2 u+8
%       u_1^2)\bigr)^2  \partial_u
The additional generalized symmetry (\ref{nuevasR})  for the $n$th-order equation of the Riccati chain permits to find, without any kind of additional integration, a new first integral, that together with the functionally independent first integrals derived in Theorem \ref{teofirstintegrals}, constitutes a complete set of first integrals for $R_n=0.$

\begin{coro}	\label{integralnuevariccati}
	The function
	\begin{equation}\label{InRiccati}
	I_{(n;n+1)}:=\dfrac{\rho_{n,n+1}}{\rho_{n,1}}=\dfrac{t^{n}}{n!}-\dfrac{t^{n-1}}{(n-1)!}\dfrac{R_{n-2}}{R_{n-1}}+\cdots+(-1)^{n+2}\dfrac{R_{-1}}{R_{n-1}}\end{equation}
	is a first integral of the $n$th-order equation $R_n=0$ of the Riccati chain. Moreover, (\ref{InRiccati}) and the functions given by (\ref{ip}) (for $m=1$) constitutes a complete set of first integrals of $R_n=0.$
\end{coro}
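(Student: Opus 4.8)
The plan is to treat the two claims separately. That $I_{(n;n+1)}$ is a first integral of $R_n=0$ needs essentially nothing new: since $P_n=R_n$ when $m=1$, Theorem~\ref{teoremasimadicional} (namely \eqref{salelambdaR}) gives $\mathbf{A}_n(\rho_{n,n+1})/\rho_{n,n+1}=\lambda$, and Theorem~\ref{teorema1} gives $\mathbf{A}_n(\rho_{n,1})/\rho_{n,1}=\lambda$ as well; substituting both into the identity \eqref{cociente} with $\rho_1=\rho_{n,n+1}$ and $\rho_2=\rho_{n,1}$ yields $\mathbf{A}_n(I_{(n;n+1)})=0$. Expanding the ratio $\rho_{n,n+1}/\rho_{n,1}$ by means of \eqref{nuevas} and using $R_{-1}=1/k$ from \eqref{-1} then produces the closed form \eqref{InRiccati}.

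For the completeness assertion we must show that the $n$ functions $I_{(n;2)},\dots,I_{(n;n)},I_{(n;n+1)}$ are functionally independent; since $R_n=0$ has an $n$-parameter general solution, this is exactly what a complete set means. By Theorem~\ref{teofirstintegrals} the first $n-1$ are already functionally independent, with common level set $\Delta=\{I_{(n;j)}=c_{(n;j)}:2\le j\le n\}$ — the $2$-dimensional, $\mathbf{A}_n$-invariant submanifold of $M^{(n-1)}$ studied in Section~\ref{section4}, there parametrized locally by $(t,u)$. Because $dI_{(n;2)},\dots,dI_{(n;n)}$ span the conormal bundle of $\Delta$, the full family $dI_{(n;2)},\dots,dI_{(n;n)},dI_{(n;n+1)}$ is linearly independent at a generic point precisely when $d\bigl(I_{(n;n+1)}|_\Delta\bigr)\neq 0$ there. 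It therefore suffices to restrict $I_{(n;n+1)}$ to $\Delta$ and verify that $\partial_u\bigl(I_{(n;n+1)}|_\Delta\bigr)\neq 0$.

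The crux is that on $\Delta$ every ratio $R_{n-j}/R_{n-1}$ appearing in \eqref{InRiccati} with $1\le j\le n$ reduces to a rational function of $t$ alone. Indeed, applying the linear system \eqref{sistema} with $m=1$ — where $R_0=u$ enters only the right-hand side — Cramer's rule gives $R_{n-j}|_\Delta=u\,h_j(t)$ with $h_j$ rational in $t$ (for instance $R_{n-1}|_\Delta=u/S_{n-1}(t)$ and $R_0/R_{n-1}|_\Delta=S_{n-1}(t)$, $S_{n-1}$ being the polynomial \eqref{polinomios}), so the factor $u$ cancels in $R_{n-j}/R_{n-1}|_\Delta$. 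The only exceptional term of \eqref{InRiccati} is the last one: because $R_{-1}=1/k$ is constant, $R_{-1}/R_{n-1}|_\Delta=S_{n-1}(t)/(ku)$, which genuinely depends on $u$. Hence $\partial_u\bigl(I_{(n;n+1)}|_\Delta\bigr)=(-1)^{n+1}S_{n-1}(t)/(k\,u^{2})\neq 0$ on the open dense set where $S_{n-1}(t)\neq 0$, and the $n$ first integrals are functionally independent, i.e.\ a complete set for $R_n=0$.

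The main obstacle is exactly the claim in the previous paragraph — that passing to $\Delta$ kills the $u$-dependence of every ``old'' ratio $R_{n-j}/R_{n-1}$ ($1\le j\le n$) while the genuinely new ingredient $R_{-1}/R_{n-1}$ retains it; this is where \eqref{sistema}, \eqref{polinomios} and the cofactor relation \eqref{relaciondet} do the work, the rest being routine. A more computational alternative, closer in spirit to the proof of Theorem~\ref{teofirstintegrals}, is to form the $n\times(n+1)$ Jacobian of $(I_{(n;2)},\dots,I_{(n;n+1)})$ with respect to $(t,u,u_1,\dots,u_{n-1})$, exploit $\partial_t I_{(n;i)}=I_{(n;i-1)}$ — which now holds for $i=n+1$ too — to remove the top powers of $t$ row by row as in \eqref{rowi}, and check that the resulting constant minor does not vanish, the extra row contributing the factor $\partial_{u_{n-1}}(R_{-1}/R_{n-1})=-1/(k\,R_{n-1}^{2})$.
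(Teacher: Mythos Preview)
Your proof is correct. The first claim is handled exactly as in the paper: both of you invoke \eqref{cociente} together with \eqref{salelambda} and \eqref{salelambdaR}.

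For functional independence the paper takes precisely the route you list as the ``more computational alternative'': it forms the $n\times(n+1)$ Jacobian $\mathcal{M}_R$ of $(I_{(n;2)},\dots,I_{(n;n+1)})$ with respect to $(t,u,u_1,\dots,u_{n-1})$, discards the $t$-column to get a square $\bar{\mathcal{M}}_R$, and states that the same row and column manipulations used in the proof of Theorem~\ref{teofirstintegrals} (with the index range simply shifted by one, so that $R_{-1}$ plays the role previously played by $R_0$) yield $\det(\bar{\mathcal{M}}_R)=R_{-1}/(R_{n-1})^{n+1}\neq 0$. Your primary argument is genuinely different: you restrict $I_{(n;n+1)}$ to the $2$-dimensional level set $\Delta$ from Section~\ref{section4}, use the linear system \eqref{sistema} to see that every ratio $R_{n-j}/R_{n-1}$ with $1\le j\le n$ becomes a function of $t$ alone there, and isolate the sole $u$-dependent term $R_{-1}/R_{n-1}\big|_{\Delta}=S_{n-1}(t)/(ku)$ via the formula $P_{n-1}\big|_{\Delta}=u/S_{n-1}(t)$ already obtained in Section~\ref{section4}. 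What your approach buys is conceptual clarity and economy once Section~\ref{section4} is in hand --- no determinant has to be recomputed, and the reason the extra integral is independent (it is the only place where $R_{-1}$ enters) is made explicit. What the paper's approach buys is self-containment: it does not rely on the auxiliary-equation analysis of Section~\ref{section4} and parallels Theorem~\ref{teofirstintegrals} verbatim.
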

\begin{proof}    {\rm  According to (\ref{cociente}), the function (\ref{InRiccati}) is a first integral of $R_n=0$ because $\rho_{n,n+1}$ and $\rho_{n,1}$ satisfy (\ref{rho}) (see Equations (\ref{salelambda}) and (\ref{salelambdaR})).
		
		In order to prove the functional independence of the first integrals $I_{(n;j)},$ for $1\leq j\leq n+1,$ let   \begin{equation}\label{matrizMR}
		{\cal{M}}_R=\dfrac{\partial(I_{(n;2)},\cdots, I_{(n;n+1)})}{\partial(t,u,u_1,\cdots,u_{n-1})}
		\end{equation} denote  associated  Jacobian matrix. Let ${\bar{\cal{M}}_R}$  be the square submatrix of ${\cal{M}}_R$   formed by its  last $n$ columns.  As in the proof of Theorem \ref{teofirstintegrals}, it can be checked that $\mbox{det}({\bar{\cal{M}}_R})=\dfrac{R_{-1}}{(R_{n-1})^{n+1}}\neq 0,$ which proves  that the rank of  the Jacobian matrix ${\cal{M}}_R$ is $n.$ Therefore  $\{I_{(n;2)},\dots,I_{(n;n+1)}\}$ are $n$ functionally independent  first integrals of $R_n=0.$ }
	
	\end{proof}
\begin{rmk}
	{\rm It can be easily checked that the functions given in (\ref{rhoi}) and (\ref{nuevas}) satisfy $\partial_t(\rho_{n,i})=\rho_{n,i-1}$ for $1\leq i\leq n+1.$ Therefore  the $n$ generalized symmetries of $R_n=0$ derived in Theorem \ref{teorema1} (for $m=1$) can be directly determined from (\ref{nuevas})
		by using successive derivations with respect to $t.$  For instance, the corresponding generalized symmetries (\ref{formula}) and (\ref{nuevasR})
		for the fourth-order equation in the Riccati chain become
		\begin{equation}\label{wsRiccati} \begin{array}{l}
		\mathbf{w_R}_{(4,5)}=   u\,\left(R_3\dfrac{t^4}{4!}-R_2\dfrac{t^3}{3!}+R_1\dfrac{t^2}{2}-R_0t+R_{-1}\right)\partial_u,   \\   [2ex] \mathbf{w_R}_{(4,4)}= u\,\left(R_3\dfrac{t^3}{3!}-R_2\dfrac{t^2}{2}+R_1t-R_0\right)\partial_u    ,
		\\   [2ex]\mathbf{w_R}_{(4,3)}=    u\,\left(R_3\dfrac{t^2}{2}-R_2t+R_1\right)\partial_u,\\   [2ex]
		\mathbf{w_R}_{(4,2)}=u\,\left(R_3t-R_2\right)\partial_u,\\   [2ex]
		\mathbf{w_R}_{(4,1)}=u\,R_3 \partial_u.   \end{array}  \end{equation}

		Similarly, it follows from (\ref{Jder}) that a complete set of first integrals of $R_n=0$ can be easily computed from the first integral
		(\ref{InRiccati})
		by successive derivations with respect to $t.$  For instance, the corresponding first integrals (\ref{ip}) and (\ref{InRiccati})
		for the fourth-order equation in the Riccati chain become
		\begin{equation}\label{IsRiccatiorden4} \begin{array}{l}
		I_{(4,5)}=\dfrac{t^4}{4!}-\dfrac{R_2}{R_3} \dfrac{t^3}{3!}  +\dfrac{R_1}{R_3}\dfrac{t^2}{2}  -\dfrac{R_0}{R_3}t+\dfrac{R_{-1}}{R_3} ,   \\   [2ex] I_{(4,4)}=\dfrac{t^3}{3!}-\dfrac{R_2}{R_3} \dfrac{t^2}{2}  +\dfrac{R_1}{R_3}t-\dfrac{R_0}{R_3} ,   \\   [2ex] I_{(4,3)}=\dfrac{t^2}{2}-\dfrac{R_2}{R_3}{t}+ \dfrac{R_1}{R_3},
		\\   [2ex]	I_{(4,2)}=          t-\dfrac{R_2}{R_3} .   \end{array}  \end{equation}
}\end{rmk}

\subsection{General solutions to the equations of the Riccati chain}

For the Riccati chain (i.e., for $m=1$), the corresponding formula (\ref{solucion})  becomes
\begin{equation}\label{chachiR}
u(t)=\dfrac{T_{n-1}(t)}{k\displaystyle\int T_{n-1}(t)dt+C_1},
%\dfrac{S_{n-1}(t)}{u}-k\int S_{n-1}(t)dt=c_1, (c_1\in \mathbb{R})
\end{equation}  which provides directly the general solutions of $R_n=0,$ by using the  polynomials given in (\ref{T}). In the next table we give the general solutions for the four first equations in the Riccati chain (see Table \ref{tablagsR}), and where the $C_i$ are arbitrary real constants:

\begin{figure}[htbp]
	\label{SR} 	\caption{General solutions for the four first equations  in the Riccati chain} \bigskip
	\centering  $
	\def\arraystretch{2}
	\begin{array}{l|l}
	\hline	n     & \mbox{General Solution of} \quad R_n=0 \\ \hline
	{1}   &      u \left( t \right) ={\dfrac {1}{k{t}+C_{
				{1}}}}
	\\%[1ex] \hline
	{2}   &      u \left( t \right) ={\dfrac {2\,(t+C_{2})}{k\left(t^{2}+2\,C_{{2}}\,t\right)+2\,C_{
				{1}}}}
	\\
	%[1ex]	\hline
	{3}   &    u \left( t \right) ={\dfrac {3!\,({t}^{2}+C_{{2}}t+C_{{3}})}{k\left({t}^{3}+3
			\,C_{{2}}{t}^{2}+6\,C_{{3}}t\right)+6\,C_{{1}}}}
	\\
	%[1ex]	\hline
	{4}   &   u \left( t \right) ={\dfrac {4!\,({t}^{3}+C_{{2}}{t}^{2}+C_{{3}}t+C
			_{{4}})}{k\left({t}^{4}+4\,C_{{2}}{t}^{3}+6\,C_{{3}}{t}^{2}+12\,C_{{4}}t\right)+12
			\,C_{{1}}}}
	
	%	\\
	%	\hline
	%	{5}   &   u \left( t \right) =5\,{\dfrac {{t}^{4}-4\,{t}^{3}c_{{2}}+6\,{t}^{2}c_{
	%				{3}}-4\,tc_{{4}}+c_{{5}}}{k{t}^{5}-5\,kc_{{2}}{t}^{4}+10\,k{t}^{3}c_{{
	%					3}}-10\,k{t}^{2}c_{{4}}+5\,kc_{{5}}t+5\,c_{{1}}}}
	
	\\
	%[1ex]	\hline
	\end{array} $
\end{figure}

\section{The Abel chain}\label{seccionA}

The chain (\ref{cadenageneral}) generated by the function $g$  given by $g(u)=ku^2$, $u\in\mathbb{R}$, is known as the Abel chain  of parameter $k\in \mathbb{R}$. We set \begin{equation}
\label{Ai}
A_0:=u\quad \mbox{and} \quad  A_i:=(\mathbf{D}_t+ku^2)(A_{i-1}),\quad i\geq 1.
\end{equation}
The first three terms of the Abel chain are the following ODEs:

\begin{figure}[htbp]
	\caption{Three first equations in the Abel chain}\label{tablagsA}
	\centering  $ \begin{array}{c|l}
	\hline
	n & A_n=0   \\ \hline
	1    &  u_1+k u^2=0   \\
	2   &  u_2+k^2 u^5+4 k u_1 u^2=0\\
	3   &u_3+ k (k^2 u^4+6 k u^2 u_1+4 u u_2+3 u_1^2)=0   \\
	%	4   & u_4+k (k^3 u^5+10 k^2 u^3 u_1+10 k u^2 u_2+5 u (3 k u_1^2+u_3) +10 u_1
	%	u_2)=0
	\end{array} $
\end{figure}

 \strut\vfill
%
% \pagebreak
%

For the Abel chain, the parameter $m$ is $m=2$ and the corresponding formula (\ref{solucion})  becomes
\begin{equation}\label{chachiA}
u(t)^2=\dfrac{(T_{n-1}(t))^2}{k\displaystyle\int (T_{n-1}(t))^2dt+C_1}.
%\dfrac{S_{n-1}(t)}{u}-k\int S_{n-1}(t)dt=c_1, (c_1\in \mathbb{R})
\end{equation}   In  Table \ref{SA} we give the general solutions for the three first equations in the Abel chain shown in Table \ref{tablagsA}:
\begin{figure}[htbp]
	\caption{  General solutions for the Abel chain}\label{SA}\bigskip
	\centering  $ 	\def\arraystretch{2}
	\begin{array}{l|l}
	n     & \mbox{General Solution of} \quad A_n=0 \\ \hline
	{1}   &  u(t)^2=\dfrac{1}{2kt+C_1}
	
	\\
	
	{2}   &  u(t)^2=\dfrac{3(t+C_2)^2}{2k(t+C_2)^3+3C_1}
	
	\\
	
	{3}   & u(t)^{2}=\dfrac{15(t^2+C_2t+C_3)^2}{k\left(6t^5+15C_2t^4+10(C_2^2+2C_3)t^3+(30C_3C_2)t^3+30C_3^2t\right)+15C_1}

	\end{array} $
\end{figure}

These solutions should be compared with those obtained in \cite[Theorem 3.2]{muriel2014lambda}, which were obtained from the solutions of the higher Riccati equations by solving an additional first-order ODE (see Eq. (41) in \cite{muriel2014lambda}).  The  procedure introduced in this paper provides directly the general solutions by means of (\ref{chachiA}), which does not require either the solutions of the Riccati higher equations or any type of additional integration.

\section{Example for $m \not \in\mathbb{Z}$}\label{section7}
The results obtained in Sections \ref{seccion2}-\ref{section4} have been established for sequences of equations of the form (\ref{cadenageneral}) generated by $g(u)=ku^m,$ for arbitrary values $k \in\mathbb{R}$ and $m\in\mathbb{Z}.$ In this section we present an example  in order to show these results are also valid for real values of parameter $m,$  with adequate restrictions on the involved domains.

The third-order ODE
\begin{equation}\label{ejemplo}
%u_3+14u_2\sqrt{u}+5\dfrac{u_1^2}{\sqrt{u}}+72 u_1u+64 \dfrac{u^3}{\sqrt{u}}=0,\quad u>0
\sqrt{u}u_3+14u_2u+5{u_1^2}+72 u_1u\sqrt{u}+64 {u^3}=0,\quad u>0
\end{equation} is the element $P_3=0,$ of the sequence (\ref{cadenageneral}) generated by the function $g(u)=4 \sqrt{u},$ which correspond to $k=4,m=1/2.$ Therefore, Theorem \ref{teoremasol} can be applied to provide directly its general solution through (\ref{solucion}):
\begin{equation}\label{solej}
u(t)=\dfrac{T_2(t)}{\left((C_3-C_2^2)\ln\Bigl|C_2+t+\sqrt{|T_2(t)|}\Bigr|+(t+C_2)\sqrt{|T_2(t)|}+C_1\right)^2},
\end{equation}
where $$T_2(t)={t^2+2C_2t+C_3},\quad C_i\in\mathbb{R}\quad \mbox{for} \quad  i=1,2,3.$$
Although the Lie symmetry group of equation (\ref{ejemplo}) is two-dimensional (and therefore  insufficient to complete its integration by quadratures), Theorem \ref{teorema1} provides three generalized symmetries of the equation,
\begin{equation}\label{simeje} \begin{array}{l}
\mathbf{w}_{(3,3)}= u\,(P_2)^{-1/2}\left(P_2\dfrac{t^2}{2}-P_1t+P_0\right)\partial_u    ,
\\   [2ex]
\mathbf{w}_{(3,2)}=    u\,(P_2)^{-1/2}\left(P_2t-P_1\right)\partial_u,\\   [2ex]
\mathbf{w}_{(3,1)}=u\,(P_2)^{1/2} \partial_u,   \end{array}  \end{equation}	
by using the preceding elements in the sequence:
$$P_0=u,\quad P_1=u_1+4u\sqrt{u},\quad  P_2=u_2+10u_1\sqrt{u}+16u^2.$$
By Theorem \ref{teofirstintegrals}, these terms can also be used to obtain, without any integration, the first integrals
\begin{equation}\label{Isejemplo} \begin{array}{l}
I_{(3,3)}=\dfrac{t^2}{2}-\dfrac{P_1}{P_2}t-\dfrac{P_0}{P_2} ,   \quad 	I_{(3,2)}=          t-\dfrac{P_1}{P_2} .   \end{array}  \end{equation}  By using  (\ref{polinomios})
$$S_2(t)=\mbox{det}(\mathcal{N}_{2})=\left|\begin{array}{cc}
t-c_{(3;2)}	& -1	    \\

\dfrac{t^2}{2!}-c_{(3;3)} & -t \\
\end{array} \right|=-\dfrac{t^2}{2}+tc_{(3,2)}-c_{(3,3)}$$
a remaining first integral can be calculated from (\ref{ultima}), which can be expressed as follows
$$  I_{(3,1)}=(\sqrt{u} -t+I_{(3,2)}))J+\sqrt{2}\left(I_{(3,3)}-\dfrac{(I_{(3,2)})^2}{2}\right)\arctan\left(\dfrac{t-I_{(3,2)}}{J}\right),$$ where  $J=J(t,u^{(2)})$ is the function given by $$J=\sqrt{\left|-\dfrac{t^2}{2}+tI_{(3,2)}-I_{(3,3)}\right|}.$$
\section{Concluding remarks and extensions}

In this paper, we have determined a unified method to study a family of differential sequences, in order to obtain their first integrals, generalized symmetries and exact solutions.   For any $n$th-order equation in each chain, we have obtained a set of $n$ generalized symmetries in evolutionary form,   %$\mathbf{w}_{n,i}=\rho_{n,i}\partial_u,$ for $1\leq i\leq n,$
and verified the essential property that the ratio of any  two characteristics become a first integral of the equation (Theorem \ref{teorema1}). Furthermore, we have demonstrated taht $n-1$ of these first integrals are functionally independent (Theorem \ref{teofirstintegrals}). It is noteworthy that
the generalized symmetries can be easily derived from one another by means of simple derivations with respect to $t$ (see  Remark \ref{simrecursivas} and equation (\ref{Jder})).

In order to obtain a complete set of first integrals, we have exploited the knowledge of a Jacobi last multiplier for each equation of the considered sequences. Thus, a remaining first integral is determined by quadrature only. Finally, we have shown that the complete set of first integrals yields the general solution of each $n$th-order equation of any sequence and can be expressed through (\ref{solucion}) in terms of arbitrary polynomials of order $n-1$ (Theorem \ref{teoremasol}).

We have applied our general results to the Riccati and Abel chains. In particular,   we have derived an additional generalized symmetry in the case of the Riccati chain. Consequently, the $n+1$ determined generalized symmetries yield a complete set of first integrals, without any kind of integration and the use of the Jacobi last multiplier.

The existence of an additional generalized symmetry with similar properties for the other sequences of the family, e.g. the Abel chain, remains as an a open problem. Another issue which needs to be investigated further is how the generalized symmetries obtained in this work can be used to determine the  complete symmetry groups of each equation, not only in the Riccati chain (which have been derived in \cite{andriopoulosdifferential} in terms of nonlocal symmetries, and in \cite{nucci2016} as Lie point symmetries of the equivalent first-order equations), but for any sequence in the family and, in particular, in the Abel chain.

% \strut\vfill
%
% \pagebreak
%

\subsection*{Acknowledgements}

C. Muriel acknowledges  the financial support of the Junta de Andaluc\'ia
research group FQM-377 and from FEDER/Ministerio de Ciencia, Innovaci\'on y Universidades-Agencia Estatal de Investigaci\'on/Proyecto PGC2018-101514-B-I00.

\label{lastpage}

\begin{thebibliography}{99}


	
	\bibitem{andriopoulos2008generalised}
	{Andriopoulos~K},
	{Generalised Symmetries and the Connection to Differential Sequences},
	{\em Int. J. Pure Appl. Math.}, 2008, V.42, {383--390}.
	
	\bibitem{andriopoulosdifferential}
	{Andriopoulos~K}, {Leach P~G~L} and   {Maharaj~A},
	{On Differential Sequences},
	{\em AMIS},  2011, V. 5,  {525--546}.
	
	
	\bibitem{Carinena2011}
	{Cari{\~{n}}ena~J~F }, {Lucas~J~de} and   {Ra{\~{n}}ada~M~F},
	{A Geometric Approach to Integrability of {A}bel   Differential Equations},
	{\em Int. J. Theor. Phys.},  2011, V.50, {2114--2124}.
	
	
	\bibitem{carinena2009geometric}
	{Cari{\~{n}}ena~J~F }, {Guha~P} and   {Ra{\~{n}}ada~M~F},
	{A Geometric Approach to Higher-order {R}iccati Chain:   {D}arboux Polynomials and Constants of the Motion},
	{\em Journal of Physics: Conference Series},  2009,  V.175,   {012009, 15pp}.
	
	\bibitem{Carinena2009Abel}
		{Cari{\~{n}}ena~J~F }, {Guha~P} and   {Ra{\~{n}}ada~M~F},
	{Higher-order {A}bel Equations: {L}agrangian Formalism, First Integrals and {D}arboux Polynomials},
	{\em Nonlinearity},  2009, V.22, {2953--2969}.
	
	
	
	\bibitem{euler2007riccati}
	{Euler~M}, {Euler~N} and  {Leach~P~G~L},
	{The {R}iccati and {E}rmakov-{P}inney Hierarchies},
	{\em J. Nonlinear Math. Phys.} , 2007, V.14, 290--310.
	
	\bibitem{euler2003linearizable}
		{Euler~M}, {Euler~N} and  {Petersson~N},
	{Linearizable Hierarchies of Evolution Equations in   (1+ 1) Dimensions},
	{\em Stud. Appl. Math.}, 2003, V.111, {315--337}.
	
	
	\bibitem{EulerMyN}
	{Euler~M} and {Euler~N},
	{A Tree of Linearizable Second Order Evolution Equations by Generalized Hodograph Transformations},
	{\em J. Nonlinear Math. Phys.}, 2001, V.8, {342--362}.
	
	
	\bibitem{leach2009novel}
	{Leach~P~G~L} and  {Euler~N},
	{A Novel {R}iccati Sequence},
	{\em J. Nonlinear Math. Phys.},  2009, V.16, {157--164}.
	
	\bibitem{muriel01ima1}
	{Muriel~C} and  {Romero~J~L},
	{New Methods of Reduction for Ordinary Differential Equations},
	{\em IMA J. Appl. Math.},   2001, V.66, {111--125}.
	
	\bibitem{muriel03lie}
	{Muriel~C} and  {Romero~J~L},
	{{$C^\infty$}-{S}ymmetries and Reduction of Equations  Without {L}ie Point Symmetries},
	{\em J. {L}ie Theory}, 2003, V.13, {167--188}.
	
	%%%%%%%%NUEVO
	\bibitem{muriel2012SIGMA}
	{Muriel~C} and  {Romero~J~L},
	\newblock Nonlocal Symmetries, Telescopic Vector Fields and
	$\lambda-$Symmetries of Ordinary Differential Equations,
	\newblock {\em SIGMA} , 2012, V.8,  {106 (21 pp)}.
	%%%%%%%%%%%%
	
	\bibitem{muriel2014lambda}
	{Muriel~C} and  {Romero~J~L},
	{$\lambda$-Symmetries of Some Chains of Ordinary Differential Equations},
	{\em Nonlinear Anal. Real World Appl.},   2014, V.16,  {191--201}.
	
	\bibitem{muriel2014lambdaJacobi}
	{Muriel~C} and  {Romero~J~L},
	{The $\lambda$-Symmetry Reduction Method and {J}acobi Last Multipliers},
	{\em Commun. Nonlinear Sci. Numer. Simulat.}, 2014, V.19, {807--820}.
	
	%%%NUEVO
	\bibitem{muriel2019CRCbook}
	{Muriel~C} and  {Romero~J~L},
	{Evolution of the Concept of $\lambda$-Symmetry and Main Applications},
in	{Nonlinear Systems and Their Remarkable Mathematical Structures}, Editors N Euler and M~C~Nucci,  Chapman and Hall/CRC, Boca Raton, Florida, 2019, V.2, Ch.6.
	%%%NUEVO
	
	\bibitem{nucci2005jacobi}
	{Nucci~M~C},
	{{J}acobi Last Multiplier and {L}ie symmetries: a Novel Application of an Old Relationship},
	{\em J. Nonlinear Math. Phys.} , 2005, V.12, {284--304}.
	
	\bibitem{nucci2016}
	{Nucci~M~C},
	{Ubiquitous Symmetries},
	{\em Theor. Math. Phys.},  2016, V.188, {1361--1370}.
	
	
	\bibitem{olver1977}
	{Olver~P~J},
	{Evolution Equations Possessing Infinitely Many Symmetries},
	{\em J. Math. Phys.},  1977, V.18, {1212, 4pp}.


\end{thebibliography}
\end{document}